\newtheorem{example}{Example}
\newtheorem{theorem}{Theorem}
\newtheorem{proposition}{Proposition}
\newcommand{\ericx}[1]{}
\newcommand{\eric}[1]{}
\newcommand{\erict}[1]{}
\newcommand*\samethanks[1][\value{footnote}]{\footnotemark[#1]}
\title{Restricting exchangeable nonparametric distributions} 
\author{Sinead Williamson\thanks{Carnegie Mellon University} \and Zoubin Ghahramani\thanks{University of Cambridge} \and Steven N. MacEachern\thanks{Ohio State University} \and Eric P. Xing\samethanks[1]} 
\def\Pois{\mbox{Poisson}}
\def\Bet{\mbox{Beta}}
\def\Bern{\mbox{Bernoulli}}
\def\NB{\mbox{NegBinom}}
\def\PB{\mbox{PoiBin}}
\begin{document} 
 
\maketitle 
 
\begin{abstract} 
Distributions over exchangeable matrices with infinitely many columns, such as the Indian buffet process, are useful in constructing nonparametric latent variable models. However, the distribution implied by such models over the number of features exhibited by each data point may be poorly-suited for many modeling tasks. In this paper, we propose a class of exchangeable nonparametric priors obtained by restricting the domain of existing models. Such models allow us to specify the distribution over the number of features per data point, and can achieve better performance on data sets where the number of features is not well-modeled by the original distribution.
\end{abstract} 
 
\section{Introduction}

The Indian buffet process~(IBP)\cite{Griffiths:Ghahramani:2005} and the related infinite
gamma-Poisson process (iGaP)\cite{Titsias:2007} are distributions over matrices with exchangeable rows and infinitely many columns, only a finite (but random) number of which contain any non-zero entries. Such distributions have proved useful for constructing flexible latent factor models that do not require us to specify the number of latent factors \emph{a priori}. In such models, each column of the random matrix corresponds to a
latent feature, and each row to a data point. The non-zero elements of
a row select the subset of features that contribute to the corresponding
data point.

However, distributions such as the IBP and the iGaP make certain
assumptions about the structure of the data that may be
inappropriate.  Specifically, such priors impose distributions on the number
of data points that exhibit a given feature, and on the number of
features exhibited by a given data point. For example, in the IBP, the number of features exhibited by a data point is marginally Poisson-distributed, and a feature appears in a new data point with probability $m/(N+1)$, where $N$ is the number of previously seen data points, and $m$ is the number of times that feature has appeared. 

These properties may be too constraining for many modeling
tasks. There are a number of cases where we might want to increase the
flexibility of these models by allowing non-Poisson marginals over the number of latent features per data point, or by adding constraints on the number of features. For example, the IBP has been used to select
possible next states in a hidden Markov model \cite{Fox:Sudderth:Jordan:Willsky:2009}. In such a model, we do not expect to see a state that allows \emph{no} transitions (including self-transitions). Nonetheless, because a data point in the IBP can
have zero features with non-zero probability, our prior
supports states with no valid transition distribution. Similarly, the
iGaP has been used to model features in images~\cite{Titsias:2007}, and we may wish to
exclude the possibility of a featureless image.

One interesting example arises when we expect, or desire, the latent features to correspond to interpretable features, or causes, of the data \cite{Wood:Griffiths:Ghahramani:2006}. We might believe that each data point exhibits exactly $K$ features -- corresponding perhaps to speakers in a dialog, members of a team, or alleles in a genotype -- but be agnostic about the total number of features in our dataset. A model that explicitly encodes this prior expectation about the number of features per data point will tend to lead to more interpretable and parsimonious results.


In other situations, we may believe that the number of features per data point follows a distribution other than
that implied by the IBP. For example, it is well known that text and
network data tends to exhibit power-law behavior, suggesting a need
for models that impose heavy-tailed distributions on the number of
features.


In the case of the IBP, two- and three-parameter extensions have been proposed that modify the distribution over the
number of data points that exhibit a feature \cite{Thibaux:Jordan:2007, Ghahramani:Griffiths:Sollich:2007, Teh:Gorur:2009}. 
While these extensions increase flexibility in the distributions over
the number of data points exhibiting each feature, the distribution
over the number of features per data point remains Poisson. As we will see, this is an inherent consequence of the use
of a completely random measure as both prior and likelihood. In this
paper, we consider methods for varying the distribution over the
number of features, by removing the completely random
assumption. 

\section{Exchangeability}
We say a finite sequence $(X_1,\dots,X_N)$ is \emph{exchangeable} (see, for example, \cite{Aldous:1985}) if
its distribution is unchanged under any permutation $\sigma$ of
$\{1,\dots,N\}$. Further, we say that an infinite sequence $X_1,X_2,\dots$ is
\emph{infinitely exchangeable} if all of its finite subsequences are
exchangeable. Such distributions are appropriate when we do not
believe the order in which we see our data is important, or when we do
not have access to all data points.

De Finetti's law tells us that a sequence is exchangeable iff the
observations are i.i.d.\ given some latent distribution. This means that
we can write the probability of any exchangeable sequence as
\begin{equation}
P(X_1=x_1,X_2=x_2,\dots) = \int_\Theta \prod_i \mu_\theta(X_i=x_i|\theta)
\nu(\theta) d\theta \label{eq:deF}
\end{equation}
for some probability distribution $\nu$ over parameter space, and some
parametrised family $\{\mu_\theta\}_{\theta\in\Theta}$ of conditional
probability distributions. 

Throughout this paper, we will use the notation $p(x_1,x_2,\dots) = P(X_1=x_1,X_2=x_2,\dots)$ to represent the joint distribution over an exchangeable sequence $x_1,x_2,\dots$, and $p(x_{n+1}|x_1,\dots,x_n)$ to represent the associated predictive distribution. We will also use the notation $p(x_1,\dots, x_n, \theta):=\prod_{i=1}^n \mu_\theta(X_i=x_i|\theta)\nu(\theta)$ to represent the joint distribution over the observations and the directing measure $\theta$.
In general $\theta$ may be infinite dimensional, which motivates the
close link between the exchangeability assumption and the need for
Bayesian nonparametric models.

\subsection{Distributions over exchangeable matrices}\label{sec:exmat}
The Indian buffet process (IBP)\cite{Griffiths:Ghahramani:2005} is a distribution over binary matrices with
exchangeable rows and infinitely many columns. In the de Finetti
representation, the mixing measure $\nu$ is a beta process, and the
conditional distribution $\mu_\theta$ is a Bernoulli process \cite{Thibaux:Jordan:2007}. The beta
process and the Bernoulli process are both \emph{completely random
  measures} -- distributions over random measures that assign
independent masses to disjoint subsets, that can be written in the
form $\Gamma = \sum_{k=1}^\infty \pi_k\delta_{\theta_k}$ \cite{Kingman:1967}. In the
parametrization of the beta process commonly used for the IBP, the
masses of the atoms $\pi_k$ of a sample from a beta process can be seen as the infinitesimal limit of $\Bet(\alpha dH_0,
1-\alpha dH_0)$ random variables, for some positive scalar $\alpha$ and
CDF $H_0$. The masses of the atoms of a sample
from a Bernoulli process are distributed according to
$\Bern(dG_0)$, for some piecewise-constant function $G_0:\mathcal{X}\rightarrow
[0,1]$ with an at most countable number of jumps. In the context of the IBP, $G_0$ is the cumulative function of the beta-process-distributed measure -- so each atom of the beta process gives the probability for a collection of Bernoulli random variables. We can think of the atoms of the beta process as determining the latent probability for a column of a matrix with
infinitely many columns, and the Bernoulli process as sampling binary
values for the entries of that column of the matrix. The resulting matrix has a finite number of non-zero entries, with the
number of non-zero entries in each row distributed as $\Pois(\alpha)$
and the total number of non-zero columns in $N$ rows distributed as
$\Pois(\alpha H_N)$, where $H_N$ is the $N$th harmonic number. The
number of rows with a non-zero entry for a given column exhibits a
``rich gets richer'' property -- a new row has a one in a given column with probability proportional to the number of times a one has appeared in that column in the preceding rows.

Several models have been formulated that allow us to vary the distribution over
the total number of features and the degree to which features are
shared between data points.  A two-parameter
extension of the IBP \cite{Griffiths:Ghahramani:2011, Thibaux:Jordan:2007} can be obtained by introducing an extra parameter
to the beta process, so that the column probabilities are distributed according to the infinitesimal limit of a $\Bet(c\alpha
dH_0,c(1-\alpha dH_0))$ distribution. The parameter $c$ controls the \emph{degree of sharing} of the features in the resulting IBP: As $c\rightarrow
0$, all data points share the same features, and as $c\rightarrow
\infty$, all data points have disjoint feature sets. A three-parameter
extension \cite{Teh:Gorur:2009} replaces the beta process with a completely random
measure called the stable-beta process, which includes the beta
process as a special case. The resulting IBP exhibits power law
behavior: the total number of features exhibited in a dataset of size
$N$ grows as $O(N^s)$ for some $s>0$, and the number of data points exhibiting
each feature also follows a power law.

A related distribution over exchangeable matrices is the infinite
gamma-Poisson process (iGaP)\cite{Titsias:2007}. Here, the de Finetti mixing measure is the
gamma process, and the family of conditional distributions is given by the
Poisson process. The atoms of the gamma process correspond to the
columns of a matrix, in a manner similar to the beta process in the
IBP. In this case, the atoms determine the mean
value of the column, and the Poisson process populates the column of
the matrix with Poisson random variables with this mean. The result
is a distribution over non-negative integer-valued matrices with
infinitely many columns and exchangeable rows. The sum of each row is
distributed according to a negative binomial
distribution.

\section{Removing the Poisson assumption}\label{sec:model}


In Section~\ref{sec:exmat}, we saw that, while existing methods are
able to alter the degree of sharing of features and the total number
of features in the IBP, they have not been able to remove the Poisson
assumption on the number of features per data point. This is noted by Teh and G\"{o}r\"{u}r \cite{Teh:Gorur:2009}, who point out

\begin{quote}
One aspect of the [three-parameter IBP] which is not power-law is the
number of dishes each customer tries. This is simply $\Pois(\alpha)$
distributed. It seems difficult to obtain power-law behavior in this
aspect within a CRM framework, because of the fundamental role played
by the Poisson process.
\end{quote} 

To elaborate on this, note that, marginally, the distribution over the
value of each element $z_k$ of a row $\mathbf{z}$ of the IBP is given by a Bernoulli distribution. Therefore, by the law of rare events, the
sum $\sum_k z_k$ is distributed according to a Poisson distribution. A similar argument applies to the
infinite gamma-Poisson process. In general, any distribution over
exchangeable random matrices based on a homogeneous CRM will have rows
marginally distributed as i.i.d. random variables. In the case of binary
matrices, these random variables must be Bernoulli, so their sum will
either be Poisson, or infinite. Therefore, in order to circumvent the requirement of a Poisson number
of features in an IBP-like model, we must remove the completely random
assumption on either the de Finetti mixing measure or the family of
conditional distributions.

\subsection{Restricting the family of conditional distributions}\label{sec:rBP}

\ericx{This title and the title of the next section are a little confusing to me. In the CRP or Polya Urn setting, I usually understand the prediction distribution of the next draw given the first $N$ draws as a {\it conditional} distribution, and the joint distribution of all draws as a joint or marginal. Now you call the later a conditional (on a draw from $B$), and the former a predictive, which may cause some confusion. Also I am not complete sure why you call the one "direct", implying the other less "direct". Both seem to be a sum constraint.}

We are familiar with the idea of restricting the support of a
distribution to a measurable subset. For example, a truncated
Gaussian is a Gaussian distribution restricted to a certain contiguous
section of the real line. In general, we can restrict the support of an
arbitrary probability distribution $\mu$ on some space $\Omega$ to a measurable
subset $A\subset\Omega$ in the support of $\mu$ by defining $\mu^{|A}(\cdot) := \mu(\cdot) \mathbb{I}(\cdot\in A) / \mu(A)$, where $\mathbb{I}(\cdot)$ is the indicator function.

\begin{theorem}[Restricted exchangeable distributions]\label{thm:1}
We can always restrict the support of an exchangeable distribution on some space by restricting the family of conditional distributions $\{\mu_{\theta}\}_{\theta\in\Theta}$ introduced in Equation~\ref{eq:deF}, to obtain an exchangeable distribution on the restricted space.
\end{theorem}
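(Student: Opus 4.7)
The strategy is to notice that exchangeability in Equation~\ref{eq:deF} flows entirely from the symmetry of the integrand $\prod_i \mu_\theta(x_i\mid\theta)$ under permutations of $(x_1,\ldots,x_N)$ for each fixed $\theta$; nothing else about the family $\{\mu_\theta\}$ is actually used. So if we simply replace each conditional by its restriction $\mu_\theta^{|A}$, that symmetry is preserved and the resulting mixture is again exchangeable on the restricted space.

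Concretely, I would proceed in three steps. First, define $\mu_\theta^{|A}(x\mid\theta):=\mu_\theta(x\mid\theta)\,\mathbb{I}(x\in A)/\mu_\theta(A)$ and verify it is a bona fide probability measure on $A$. The one subtle point is avoiding division by zero when $\mu_\theta(A)=0$, which I would handle by restricting $\nu$ to the measurable set $\Theta':=\{\theta:\mu_\theta(A)>0\}$ and renormalising to a new mixing distribution $\nu'$; crucially $\nu'$ does not depend on the sample size $N$, so it remains compatible with an infinitely exchangeable construction.

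Second, I would define the candidate restricted joint
\begin{equation*}
p^{|A}(x_1,\ldots,x_N):=\int_{\Theta'}\prod_{i=1}^{N}\mu_\theta^{|A}(x_i\mid\theta)\,\nu'(\theta)\,d\theta,
\end{equation*}
which is supported on $A^N$ by construction. For any permutation $\sigma$ of $\{1,\ldots,N\}$, commutativity of multiplication gives $\prod_i\mu_\theta^{|A}(x_{\sigma(i)}\mid\theta)=\prod_i\mu_\theta^{|A}(x_i\mid\theta)$ pointwise in $\theta$, and this equality carries through the integral, yielding exchangeability of $p^{|A}$. Third, I would extend to an infinitely exchangeable sequence by checking consistency of the finite marginals: integrating out $x_N$ replaces the last factor by $1$ and reproduces the same formula at dimension $N-1$, so Kolmogorov's extension theorem supplies a consistent infinitely exchangeable law on $A^\infty$, which is precisely the de Finetti mixture governed by the pair $(\nu',\{\mu_\theta^{|A}\})$.

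The only genuine obstacle is the measure-theoretic handling of the $\nu$-null set where $\mu_\theta(A)=0$; once $\nu$ is conditioned on $\Theta'$, the rest of the proof is essentially just Equation~\ref{eq:deF} read with a different but equally valid choice of conditional distributions.
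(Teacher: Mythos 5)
Your proof is correct and follows essentially the same route as the paper: replace each conditional $\mu_\theta$ by its restriction $\mu_\theta^{|A}$, observe that the product structure (i.i.d.\ given $\theta$) is preserved, and conclude exchangeability of the mixture by de Finetti. The extra care you take over the set $\{\theta:\mu_\theta(A)=0\}$ and over Kolmogorov consistency is a welcome tightening of details the paper leaves implicit, but it does not change the argument.
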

\begin{proof} 
Consider an unrestricted exchangeable model with de
Finetti representation $p(x_1,\dots,x_N,\theta) = \prod_{i=1}^N \mu_\theta(X_i=x_i) \nu(\theta)$. Let $p^{|A}$ be the restriction of $p$ such that $X_i \in A,
i=1,2,\dots$, obtained by restricting the family of conditional distributions $\{\mu_\theta\}$ to $\{\mu_\theta^{|A}\}$ as described above. Then
\begin{equation*}
p^{|A}(x_1,\dots,x_N,\theta) = \prod_{i=1}^N \mu_\theta^{|A}(X_i=x_i) \nu(\theta) = \prod_{i=1}^N \frac{\mu_\theta(X_i=x_n)}{\mu_\theta(X_i\in A)} \nu(\theta)\, ,
\end{equation*}
and
\begin{equation}
p^{|A}(x_{N+1}|x_1,\dots,x_N)\propto \int_\Theta
\frac{\prod_{i=1}^{N+1}\mu_\theta(X_{i}=x_i)}{\prod_{i=1}^{N+1}\mu_\theta(X_i\in
  A)}\nu(\theta) d\theta  \label{eq:righteq}
\end{equation}
is an exchangeable sequence by construction, according to de Finetti's
law.
%
\end{proof}
We give two examples based on the IBP.

\begin{example}[Restriction to a fixed number of non-zero entries per row]\label{ex:fixrow}
 Recall that, conditioned on a latent beta
process-distributed measure $B:=\sum_k\pi_k\delta_{\theta_k}$, a sample from the IBP is distributed according to a Bernoulli process. This distribution has support in $\{0,1\}^\infty$.  We can restrict the support of this Bernoulli
process to an arbitrary measurable subset $ A \subset \{0,1\}^\infty$ -- for
example, the set of all vectors $\mathbf{z}\in \{0,1\}^\infty$ such that
$\sum_k z_k = S$ for some integer $S$. The conditional distribution of a matrix
$\mathbf{Z}=\{ \mathbf{z}_1,\dots,\mathbf{z}_N\}$ under such a
  distribution is given by:
\begin{equation}
\mu_B^{|S}(Z=\mathbf{Z}) = \frac{\prod_{i=1}^N
\mu_B(Z_i=\mathbf{z}_i)\mathbb{I}(\sum_k z_{ik} =
S)}{(\mu_B(\sum_k Z_{ik} = S))^N}=  \frac{\prod_{k=1}^\infty
  \pi_k^{m_k}(1-\pi_k)^{N-m_k}}{\PB(S|\{\pi_k\}_{k=1}^\infty)^N}\, ,\label{eq:rBeP}
\end{equation}
where $m_k = \sum_n z_{nk}$ and $\PB(\cdot|\{\pi_k\}_{k=1}^\infty)$ is the infinite limit of the Poisson-binomial distribution \cite{Chen:Liu:1997}, which describes the distribution over the number of successes in a sequence of independent but non-identical Bernoulli trials. The probability of $\mathbf{Z}$ given in Equation~\ref{eq:rBeP} is the infinite limit of the conditional Bernoulli distribution \cite{Chen:Liu:1997}, which describes the distribution of the locations of the successes in such a trial, conditioned on their sum.

\end{example}


\begin{example}[Restriction to a random number of non-zero entries per row]\label{ex:randrow}
Rather than specify the number of non-zero entries in each row a priori, we can allow it to be random, with some arbitrary distribution $f(\cdot)$ over the non-negative integers. A Bernoulli process restricted to have
$f$-marginals can be described as
\begin{equation}
\mu_B^{|f}(\mathbf{Z}) = \prod_{i=1}^N\mu_B^{|S_i}(Z_i = \mathbf{z}_i)f(S_i)
=  \bigg(\prod_{k=1}^\infty
  \pi_k^{m_k}(1-\pi_k)^{N-m_k}\bigg) \cdot 
\prod_{i=1}^N\frac{f(S_i)}{\PB(S_i|\{\pi_k\}_{k=1}^\infty}) \, , \label{eq:fBeP}
\end{equation}
where $S_n = \sum_k z_{nk}$.
Again, if we marginalize over $B$, the resulting distribution is
exchangeable, because mixtures of i.i.d. distributions are i.i.d.
\end{example}

We note that, even if we choose $f$ to be $\Pois(\alpha)$, we will not
recover the IBP. The IBP has $\Pois(\alpha)$ marginals over the number non-zero elements per row, but the
conditional distribution is described by a Poisson-binomial
distribution. The Poisson-restricted IBP, however, will have Poisson
marginal \emph{and} conditional distributions.

We also note that the fixed-row-sum model of Example~\ref{ex:fixrow} can be seen as a special case of the random-distribution model of Example~\ref{ex:randrow}, where the distribution $f$ is degenerate on $S$.

Figure~\ref{fig:prior} shows some examples of samples from the single-parameter IBP, with parameter $\alpha = 5$, with various restrictions applied.

\begin{figure}[ht]
  \centering

    \includegraphics[width=0.8\textwidth]{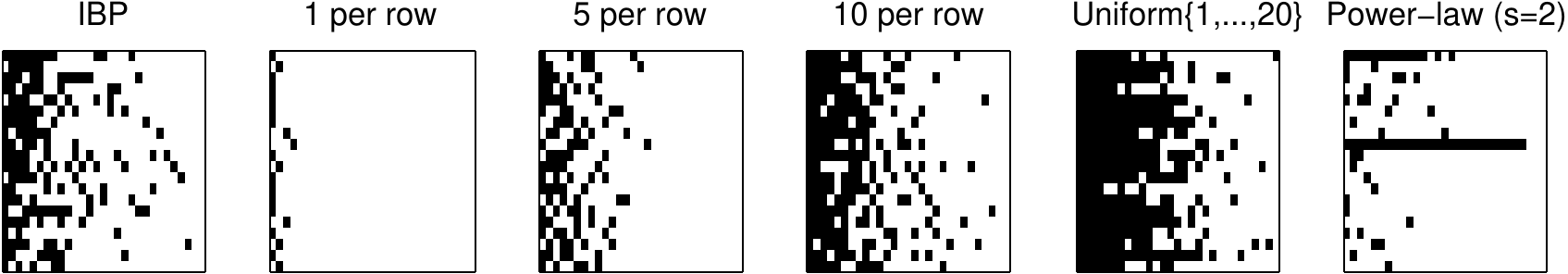}
    \caption{Samples from restricted IBPs.}
    \label{fig:prior}
\end{figure}
\subsection{Direct restriction of the predictive distributions}

The construction in Section~\ref{sec:rBP} is explicitly conditioned on a draw $B$ from the de Finetti mixing measure $\nu$. Since it might be cumbersome to explicitly represent the infinite dimensional object $B$, it is tempting to consider constructions that directly restrict the predictive distribution $p(X_{N+1}|X_1,\dots,X_N)$, where $B$ has been marginalized out. In other words, can we simply sample from an exchangeable distribution
and discard samples that fall outside our region of interest?

We can certainly find examples of exchangeable sequences that remain exchangeable after restricting their conditional distributions:
\begin{example}[Infinite gamma-Poisson process]
Consider restricting the predictive distribution of the infinite gamma-Poisson distribution such that each row sums to $S$.
In the predictive distribution for the iGaP, for each
previously observed feature $k$, we sample an element $X_{nk} \sim
\NB(m_k, n/(n+1))$. We then sample a value $N_n^*\sim
\NB(\theta,n/(n+1))$ and assign $N_n^*$ counts to new features
according to a Chinese restaurant process. If we restrict this model
such that each row sums to $1$, we have:
\begin{equation*}
\begin{split}
p^{|1}(X_{(N+1)k} = 1|X_{1:N}) =& \frac{p(X_{(N+1)k} = 1|X_{1:N})\prod_{j\neq
    k} p(X_{(N+1)j} = 0|X_{1:N})}{p(\sum_jX_{(N+1)j} =
  1|X_{1:N})}\\
=& \begin{cases}
\frac{m_k}{\sum_j m_j + \theta} & \mbox{if feature k has been seen before}\\
\frac{\theta}{\sum_j m_j + \theta} &\mbox{otherwise.}
\end{cases}
\end{split}
\end{equation*}

In other words, the infinite gamma-Poisson process restricted to sum
to one is a Chinese restaurant process. If we restrict the iGaP to
sum to $S$, we have $S$ samples per data point from a Chinese
restaurant process.
\end{example}

However, this result does not hold for direct restriction of arbitrary exchangeable sequences.
\begin{theorem}[Sequences obtained by directly restricting the predictive distribution of an exchangeable sequence are not, in general, exchangeable.]
Let $p$ be the distribution of the unrestricted exchangeable model introduced in the proof of Theorem~\ref{thm:1}. Let $p^{*|A}$ be the distribution obtained by directly restricting
this unrestricted exchangeable model such that $X_n\in A$, i.e.
\begin{equation}
p^{*|A}(x_{N+1}|x_1,\dots,x_N) \propto 
\frac{\int_\Theta \prod_{i=1}^{N+1}\mu_\theta(X=x_i)\nu(\theta) d
  \theta}{\int_\Theta \prod_{i=1}^{N+1}\mu_\theta(X\in
  A) \nu(\theta) d\theta} \, .
\end{equation}
In general, this will not be equal to Equation~\ref{eq:righteq}, and cannot
be expressed as a mixture of i.i.d. distributions. 
\end{theorem}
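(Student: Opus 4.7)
The plan is to prove both claims with a single counterexample: I will exhibit a model for which $p^{*|A}$ fails to be exchangeable. This immediately yields both conclusions, because the sequence produced by Equation~\ref{eq:righteq} is exchangeable by Theorem~\ref{thm:1}, and by de Finetti's theorem any mixture of i.i.d.\ distributions is exchangeable.

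First, I would use the chain rule applied to the predictive given in the statement to compute the joint distribution of the first two observations under $p^{*|A}$. Writing $f_N(x_1,\ldots,x_N) := \int_\Theta \prod_{i=1}^N \mu_\theta(X_i=x_i)\nu(\theta)\,d\theta$ and $g_N := \int_\Theta \mu_\theta(A)^N \nu(\theta)\,d\theta$, normalizing over $x_1 \in A$ and then $x_2 \in A$ gives after a short manipulation
\[
p^{*|A}(x_1, x_2) \;=\; \frac{f_2(x_1,x_2)}{g_1 \cdot \mathbb{E}[\mu_\theta(A) \mid X_1 = x_1]}\, ,
\]
where the conditional expectation is taken under the posterior of $\theta$ in the unrestricted model. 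Since $f_2$ is symmetric in its arguments, symmetry of $p^{*|A}(x_1, x_2)$ under swapping $x_1$ and $x_2$ is equivalent to the requirement that $\mathbb{E}[\mu_\theta(A) \mid X_1 = x]$ not depend on $x \in A$. This is the crux of the argument: the condition holds in degenerate cases (e.g.\ when $\mu_\theta(A)$ is $\nu$-almost surely constant in $\theta$), but fails generically because the posterior of $\theta$ genuinely shifts with the first observation.

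Second, I would exhibit a minimal example where the symmetry condition demonstrably fails. A two-point mixture suffices: let $\theta \in \{0,1\}$ with equal prior, pick two distinct distributions $\mu_0, \mu_1$ on a small alphabet such as $\{0,1,2\}$ chosen so that $\mu_0(A) \neq \mu_1(A)$, and set $A = \{1,2\}$. A direct calculation shows $\mathbb{E}[\mu_\theta(A) \mid X_1 = 1] \neq \mathbb{E}[\mu_\theta(A) \mid X_1 = 2]$, from which $p^{*|A}(1, 2) \neq p^{*|A}(2, 1)$, violating exchangeability already for the pair $(X_1, X_2)$. Non-exchangeability then rules out a de Finetti representation, and forces a discrepancy with the always-exchangeable Equation~\ref{eq:righteq}.

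The main obstacle is conceptual rather than technical: one must be careful to track whether the $\mu_\theta(A)^{N+1}$ normalization appears \emph{inside} the integral over $\theta$ (as in $p^{|A}$) or \emph{outside} it (as in $p^{*|A}$). It is precisely this placement that separates the exchangeability-preserving restriction of Theorem~\ref{thm:1} from the exchangeability-breaking construction considered here, and keeping the two operations cleanly distinguished in notation is the only real pitfall in writing up the argument.
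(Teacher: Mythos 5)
Your argument is correct, and it shares the paper's overall strategy---refuting exchangeability by exhibiting a failure of pairwise symmetry---but it proceeds via a different counterexample and adds a layer of generality the paper does not have. The paper's proof simply points to its ``three-urn buffet'' (a finite IBP with three unit-mass atoms, directly restricted to exactly one red ball per draw) and checks by explicit computation that the two-step-ahead predictive probabilities of $(r,b,b)$ followed by $(b,r,b)$ and of $(b,r,b)$ followed by $(r,b,b)$ differ. You instead first observe that the denominator in the displayed predictive is constant in $x_{N+1}$, so the direct restriction is just the marginal predictive renormalized to $A$, and then derive the identity $p^{*|A}(x_1,x_2)=f_2(x_1,x_2)\big/\bigl(g_1\,\mathbb{E}[\mu_\theta(A)\mid X_1=x_1]\bigr)$. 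This isolates exactly when direct restriction preserves pairwise exchangeability, namely when the posterior expectation of $\mu_\theta(A)$ is constant over $x\in A$---a criterion the paper never states, which both explains why its iGaP example happens to survive direct restriction and makes the genericity of the failure transparent. Your two-component mixture on $\{0,1,2\}$ is then an essentially minimal witness (e.g.\ $\mu_0=(0.8,0.1,0.1)$, $\mu_1=(0.2,0.1,0.7)$, $A=\{1,2\}$ gives posterior expectations $0.5$ and $0.725$), and the final inferences---non-exchangeability rules out any i.i.d.\ mixture representation and forces disagreement with the exchangeable restriction of Theorem~\ref{thm:1}---are immediate. The only point to make explicit in a final write-up is that the joint law of the directly restricted process is defined by chaining its predictives from $N=0$, since the paper specifies only the predictives; with that stated, your proof is complete and, if anything, more informative than the paper's.
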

\begin{proof}
To demonstrate that this is true, consider the counterexample given in Example~\ref{ex:buffet}.
\end{proof}
\begin{example}[A three-urn buffet]
Consider a simple form of the Indian buffet process, with a base measure consisting of three unit-mass atoms. We can represent the predictive distribution of such a model using three indexed urns, each
containing one red ball (representing a one in the resulting matrix) and one blue ball (representing a zero in the resulting matrix). We generate a sequence of
ball sequences by repeatedly picking a ball from each urn, noting the
ordered sequence of colors, and returning the balls to their urns,
plus one ball of each sampled color.\label{ex:buffet} 
\end{example}
\begin{proposition}
The three-urn buffet is exchangeable.
\end{proposition}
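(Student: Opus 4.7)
The plan is to observe that the three-urn buffet decomposes into three independent, marginally exchangeable processes, and then argue that a product of independent exchangeable sequences is itself exchangeable. Concretely, because balls are never moved between urns and the urns are sampled independently at each step, the sequence of colors drawn from urn $j$ (for $j \in \{1,2,3\}$) is completely determined by its own past and is independent of the other two urns. Each urn, viewed on its own, is the standard two-color P\'olya urn starting from one red and one blue ball.

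Next I would recall (or give a one-line reminder of) the classical de Finetti representation for the two-color P\'olya urn: the infinite sequence of colors drawn from a single urn started with one red and one blue ball is exchangeable, with mixing measure $\nu_j = \Bet(1,1)$ and conditional distribution $\mu_{\pi_j} = \Bern(\pi_j)$. This is a textbook application of Equation~\ref{eq:deF}, and gives us, for each urn, an exchangeable infinite sequence $Y^{(j)}_1, Y^{(j)}_2, \dots$ of color indicators.

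I would then combine the three urns. Let $X_n = (Y^{(1)}_n, Y^{(2)}_n, Y^{(3)}_n)$ be the triple of colors drawn at step $n$. Because the three urn processes are independent, the joint law of $X_1,\dots,X_N$ factorizes as a product of the three marginal laws, each of which is invariant under any permutation $\sigma$ of $\{1,\dots,N\}$. Hence so is the product, proving finite exchangeability for every $N$, and therefore infinite exchangeability of $\{X_n\}$. Equivalently, the three-urn buffet admits the explicit de Finetti representation
\begin{equation*}
p(x_1,\dots,x_N) = \int_{[0,1]^3} \prod_{n=1}^N \prod_{j=1}^3 \pi_j^{x_n^{(j)}}(1-\pi_j)^{1-x_n^{(j)}} \, d\pi_1\, d\pi_2\, d\pi_3 ,
\end{equation*}
which matches Equation~\ref{eq:deF} with $\theta = (\pi_1,\pi_2,\pi_3)$, $\nu$ uniform on $[0,1]^3$, and $\mu_\theta$ a product of three Bernoullis.

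The main obstacle, if there is one, is simply verifying the P\'olya-urn de Finetti fact cleanly; I would either cite it or note that the joint probability of any sequence with $m$ reds and $N-m$ blues under the urn dynamics equals $m!(N-m)!/(N+1)!$, which depends only on $m$ and thus is invariant under permutations of the time indices. Given that, independence across urns closes the argument immediately, and no further work is needed.
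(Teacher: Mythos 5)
Your proof is correct, but it takes a genuinely different route from the paper's. The paper invokes the characterization of Fortini, Ladelli and Regazzini --- a sequence is exchangeable iff, given the first $N$ elements, the joint predictive law of the $(N{+}1)$st and $(N{+}2)$nd entries is exchangeable --- and then verifies this by a direct computation on the urn counts (Equation~\ref{eq:IBP_ex}). You instead decompose the process into three independent two-color P\'olya urns, invoke the classical fact that each such urn is exchangeable with de Finetti representation $\nu_j=\Bet(1,1)$, $\mu_{\pi_j}=\Bern(\pi_j)$, and note that applying a common permutation to a product of independent exchangeable sequences leaves each factor's law, and hence the product law, invariant. Both arguments are sound; your computation $m!(N-m)!/(N+1)!$ for the single-urn sequence probability is the correct P\'olya-urn marginal and matches the $\Bet(1,1)$ mixture integral. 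What your approach buys is more than a bare exchangeability check: it exhibits the explicit de Finetti mixing measure (uniform on $[0,1]^3$) for this finite IBP, which connects directly to the paper's larger point that the restricted constructions of Section~\ref{sec:rBP} must be carried out conditionally on this latent $\boldsymbol{\pi}$. What the paper's two-step predictive criterion buys is that it works directly on the predictive distributions without identifying the mixing measure, which is exactly the form of argument reused (in the negative) for the second proposition, where the directly restricted scheme is shown \emph{not} to satisfy the same two-step symmetry. One small caveat: your independence-across-urns argument relies on the urns never interacting, which holds here but would not survive the row-sum restriction introduced immediately afterwards --- so the decomposition, while clean for this proposition, does not extend to the restricted model the way the paper's predictive-distribution framework does.
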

\begin{proof}By using the fact that a sequence
is exchangeable iff the predictive distribution given the first $N$
elements of the sequence of the $N+1$st and
$N+2$nd entries is exchangeable \cite{Fortini:Ladelli:Regazzini:2000}, it is trivial to show that this model
is exchangeable and that, for example,
\begin{equation}
\begin{split}
p(&X_{N+1} = (r,b,r), X_{N+2} = (r,r,b)|X_{1:N})\\ &= \frac{m_1 m_2
  (N+1-m_3)}{(N+1)^3}\cdot \frac{(m+1+1)(N+1-m_2)m_3}{(N+2)^3}\\
&=  p(X_{N+1} = (r,r,b), X_{N+2} = (r,b,r)|X_{1:N})\, ,\label{eq:IBP_ex}
\end{split}
\end{equation}
where $m_i$ is the number of times in the first $N$ samples that the
$i$th ball in a sample has been red.
\end{proof} 
\begin{proposition}The directly restricted three-urn scheme (and, by extension, the directly restricted IBP) is not exchangeable.
\end{proposition}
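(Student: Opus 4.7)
The plan is to exhibit a multiset of three-urn outcomes whose two orderings are assigned different probabilities under the direct-restriction rule, contradicting any exchangeable representation. I will work with the restriction to rows of sum exactly one, i.e.\ $A=\{(r,b,b),(b,r,b),(b,b,r)\}$, which is simple, symmetric under column permutations, and small enough to compute by hand. For any prefix $x_{1:N}\in A^N$, the direct-restriction one-step predictive is
\[
p^{*|A}(x\mid x_{1:N}) \;=\; \frac{p(x\mid x_{1:N})\,\mathbb{I}(x\in A)}{p(X\in A\mid x_{1:N})},
\]
and each factor on the right is read off directly from the current urn contents using the three-urn Polya dynamics of Example~\ref{ex:buffet}.

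I would first dispense with the possibility of a two-step violation: after any single draw $x_1 \in A$ the column-permutation symmetry of $A$ is preserved (two of the three urns still have identical composition), from which a short check gives $p^{*|A}(a,b)=p^{*|A}(b,a)$ for all $a,b\in A$. The counterexample must therefore live at length three. I would then take the triple $((r,b,b),(r,b,b),(b,r,b))$ and its permutation $((r,b,b),(b,r,b),(r,b,b))$, which share a multiset and so would have equal probability under any exchangeable law. Walking through the chain rule---the intermediate urn states after the first two draws differ markedly between the two orderings, $\{3r,1b\},\{1r,3b\},\{1r,3b\}$ in the first case versus $\{2r,2b\},\{2r,2b\},\{1r,3b\}$ in the second---and dividing each unrestricted predictive by the appropriate normalizer $p(X\in A\mid x_{1:N})$, yields joint probabilities of $2/99$ and $1/42$ respectively. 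Since $2/99\neq 1/42$, the directly-restricted three-urn buffet fails the predictive-pair criterion for exchangeability conditional on $X_1=(r,b,b)$.

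To upgrade the conclusion to the full IBP, I would note that the three-urn buffet is precisely the IBP whose underlying beta-process base measure has three unit-mass atoms, and more generally it arises as the marginal on any three fixed column-indices of an atomic slice of an arbitrary IBP base measure. Any direct restriction of the IBP that depends on the outcome only through the indicator of lying in a set defined on those three columns (e.g.\ restricting each row to a prescribed column sum) therefore projects onto the three-column marginal as the analogous three-urn restriction, so the toy failure of exchangeability implies failure in the IBP.

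The main obstacle is really just choosing a small enough counterexample despite the heavy symmetry of the toy: the column-symmetry of $A$ forces two-step joints to agree, so one must engineer a three-draw sequence in which repeating the same row twice skews the urn compositions asymmetrically, breaking the invariance that saves the length-two case. Once the right three-sequence is identified, the arithmetic is elementary and leaves no room for the restricted process to admit a de Finetti representation.
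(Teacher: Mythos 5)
Your proof is correct and takes essentially the same route as the paper: both restrict the three-urn scheme to rows summing to one and show that the history-dependent normalizer makes the two-step predictive after an asymmetric prefix order-dependent, which violates exchangeability (the paper invokes the same two-step criterion of Fortini et al.). The only real difference is that you instantiate the counterexample numerically ($2/99$ vs.\ $1/42$, and your arithmetic checks out) whereas the paper leaves the urn counts $m_k$ symbolic.
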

\begin{proof}
Consider the same scheme, but where the outcome is restricted
such that there is one, and only one, red ball per sample. 
The
probability of a sequence in this restricted model is given by
\begin{equation*}
\textstyle p^*(X_{N+1}=x|X_{1:N}) = \sum_{k=1}^3  \frac{m_k}{N+1-m_k} \mathbb{I}(x_i = r) \big/ \sum_{k=1}^3 \frac{m_k}{N+1-m_k}
\end{equation*}
and, for example,
\begin{equation}
\begin{split}
p^*(&X_{N+1} = (r,b,b), X_{N+2} = (b,r,b)|X_{1:N})\\ &=
\frac{\frac{m_1}{N+1-m_1}}{\sum_k
  \frac{m_k}{N+1-m_k}}\cdot
\frac{\frac{m_2}{N+2-m_3}}{\frac{m_2}{N+1-m_2} - \frac{m_2}{N+2-m_2} +
  \sum_k\frac{m_k}{N+1-m_k}}\\
&\neq  p^*(X_{N+1} = (b,r,b), X_{N+2} = (r,b,b)|X_{1:N})\,,\label{eq:rIBP_unex}
\end{split}
\end{equation}
therefore the restricted model is not exchangeable. By introducing a normalizing constant -- corresponding to restricting
over a subset of $\{0,1\}^3$ -- that depends on the previous samples, we
have broken the exchangeability of the sequence.

By extension, a model obtained by directly restricting the predictive distribution of the IBP is not exchangeable.
\end{proof}
 
This section shows that, while directly restricting the predictive distribution of the IBP is appealing because it avoids instantiating the infinite latent measure, this construction \emph{does not} yield an exchangeable distribution. Modifying a Gibbs sampler for the IBP based on the directly restricted predictive distribution would not yield a valid sampler for either the above model, or the exchangeable model described in Section~\ref{sec:rBP}. For the remainder of the paper, we focus on developing valid sampling schemes for the exchangeable model, which we will refer to as a restricted IBP (rIBP).

\ericx{Since you focus on exchangeable construction, I am not sure 3.2 on nonexchangeable is needed, should be much simplified. Currently I do not see a differential treatment of the two alternatives, but then followed by sudden claim that we will do the first one, without a discussion on why it should be favored. In this case, 3.2 become distracting as it take 1.5 pages as if a main model, only ended up not being used without giving a reason.}

\section{Inference}\label{sec:inference}
In this section, we focus on inference methods for restricted IBPs, since samplers for the restricted iGaP can easily be obtained by modifying existing samplers for the CRP.

We focus on sampling in a truncated model, where we approximate the countably infinite sequence $\{\pi_k\}_{k=1}^\infty$ with a large, but finite, vector $\boldsymbol{\pi}:=(\pi_1,\dots,\pi_K)$, where each atom $\pi_k$ is distributed according to $\mbox{Beta}(\alpha/K,1)$. Conditioned on $\boldsymbol{\pi}$, we can evaluate the probability of a given matrix $\mathbf{Z}$:
\begin{equation}
 \mu_{\mathbf{\pi}}^{|f}(\mathbf{Z})\propto \frac{\prod_{k=1}^K \pi_k^{m_k}(1-\pi_k)^{(N-m_k)}f(S_n)}{\prod_{n=1}^N \PB(S_n|\boldsymbol{\pi})} \label{eq:loglik}
\end{equation}
where $S_n=\sum_k z_{nk}$ and $m_k = \sum_n z_{nk}$. 

Let $g(X|\mathbf{Z})$ be the probability of the data given a binary matrix $\mathbf{Z}$. If the number of entries in each row is random and distributed according to $f$, then we can Gibbs sample each entry of $\mathbf{Z}$ according to
\begin{equation}
\begin{split}
p(z_{nk} &= 1|x_n, \boldsymbol{\pi}, \mathbf{Z}_{\neg nk}, \sum_{j\neq k} z_{nj} = a)\\ &\propto \pi_k\frac{f(a+1)}{p(\sum_k z_k = a+1 |\boldsymbol{\pi})}g(x_n|z_{nk}=1,\mathbf{Z}_{\neg nk},\mathbf{Z}_{\neg n} )\\
p(z_{nk} &= 0|x_n, \boldsymbol{\pi}, \mathbf{Z}_{\neg nk}, \sum_{j\neq k} z_{nj} = a)\\ &\propto (1-\pi_k)\frac{f(a)}{p(\sum_k z_k = a |\boldsymbol{\pi})}g(x_n|z_{nk}=0,\mathbf{Z}_{\neg nk},\mathbf{Z}_{\neg n} )\label{eq:GibbsZ1}
\end{split}
\end{equation}

If the number of non-zero entries per row is fixed, we must resample the location of the non-zero entries. Let $z_n^{(j)}$ indicate the location of the $j$th non-zero entry of $\mathbf{z}_n$. We can Gibbs sample $z_n^{(j)}$ according to

\begin{equation}
p(z_n^{(j)} = k|x_n,\boldsymbol{\pi}, z_m^{(\neg j)})\propto \frac{\pi_k}{1-\pi_k}g(x_n|z_n^{(j)} = k, z_n^{(\neg j)},\mathbf{Z}_{\neg n} )\, .\label{eq:GibbsZ2} 
\end{equation}

Gibbs sampling alone can yield poor mixing, especially in the case where the sum of each row is fixed. To alleviate this problem, we incorporate Metropolis Hastings moves that propose an entire row of $\mathbf{Z}$.

Conditioned on $\mathbf{Z}$, the the distribution of $\boldsymbol{\pi}$ is described by
\begin{equation}
\begin{split}
\nu(\{\pi_k\}_{k=1}^\infty |\mathbf{Z}) &\propto \mu_{\{\pi_k\}}^{|f}(Z=\mathbf{Z}) \nu(\{\pi_k\}_{k=1}^\infty)\\
&= \mu_{\{\pi_k\}}(Z=\mathbf{Z}) \nu(\{\pi_k\}_{k=1}^\infty)\prod_{n=1}^N \frac{f(S_n)}{\mu_{\{\pi_k\}}(|Z|=S_n)}\\
&\propto\frac{\prod_{k=1}^K \pi_k^{(m_k + \frac{\alpha}{K} -1)}(1-\pi_k)^{(N-m_k)}}{\prod_{n=1}^N \PB(S_n|\boldsymbol{\pi})}
\end{split}\label{eq:post}
\end{equation}
The Poisson-binomial term can be calculated exactly in $O(K\sum_kz_{nk})$
using either a recursive algorithm \cite{Barlow:Heidtmann:1984, Chen:Dempster:Liu:1994} or an algorithm based on the characteristic function that uses the Discrete Fourier Transform \cite{Fernandez:Williams:2010}. It can also be approximated using a skewed-normal approximation to the Poisson-binomial distribution \cite{Volkova:1996}. We can therefore sample from the posterior of $\boldsymbol{\pi}$ using Metropolis Hastings steps. Since we believe the posterior will be close to the posterior for the unrestricted model, we use the proposal distribution 
$q(\pi_k|Z) = \mbox{Beta}(\alpha/K + m_k, N+1-m_k)$ to propose new values of $\pi_k$.

In certain cases, we may wish to directly evaluate the predictive distribution $p^{|f}(\mathbf{z}_{N+1}|\mathbf{z}_1,\dots,\mathbf{z}_N)$. Unfortunately, in the case of the IBP, we are unable to perform the integral in Equation~\ref{eq:righteq} analytically. We can, however, \emph{estimate} the predictive distribution using importance sampling. We sample $T$ measures $\boldsymbol{\pi}^{(t)} \sim \nu(\boldsymbol{\pi}|\mathbf{Z})$, where $\nu(\boldsymbol{\pi}|\mathbf{Z})$ is the posterior distribution over $\boldsymbol{\pi}$ in the finite approximation to the IBP, and then weight them to obtain the restricted predictive distribution
\begin{equation}
p^{|f}(\mathbf{z}_{N+1}|\mathbf{z}_1,\dots,\mathbf{z}_N) \approx  \frac{1}{T}\frac{\sum_{t=1}^T w_t \mu_{\boldsymbol{\pi}^{(t)}}^{|f}(\mathbf{z}_{N+1})}{\sum_t w_t} \, ,\label{eq:importance}
\end{equation}
where $w_t = \mu_{\boldsymbol{\pi}^{(t)}}^{|f}(\mathbf{z}_1,\dots, \mathbf{z}_N) / \mu_{\boldsymbol{\pi}^{(t)}}(\mathbf{z}_1,\dots, \mathbf{z}_N)$, and $\mu_{\boldsymbol{\pi}^{(t)}}^{|f}(\cdot)$ is given by Equation~\ref{eq:loglik}

\section{Experimental evaluation}\label{sec:results}
In this paper, we have described how distributions over exchangeable
matrices, such as the IBP, can be modified to allow more flexible 
control on the distributions over the number of latent features, and described methods to perform inference in such models. In this section, we perform experiments on both real and synthetic data. The synthetic data experiments are designed to show that appropriate restriction can yield more interpretable features, and to explore which inference techniques are appropriate in which data regimes. The experiments on real data are designed to show that careful choice of the distribution over the number of latent features in our models can lead to improved predictive performance.

\subsection{Synthetic data}

We begin by evaluating the restricted IBP on synthetic image data. We
generated 50 images, consisting of two binary features selected at
random from a set of four possible features, plus Gaussian
noise. This experiment is a variant of an image
analysis experiment performed in \cite{Griffiths:Ghahramani:2005}.

We tried to learn the latent features using two models: A single-parameter IBP, and a single-parameter IBP restricted to have two features present in each data point. In the restricted model, we alternately sampled $\boldsymbol{\pi}$ and $\mathbf{Z}$ as described in Section~\ref{sec:inference}; for the vanilla IBP we Gibbs sampled the $\pi_k$ in a truncated model. In both cases we fixed $\alpha=2$ and truncated the model to allow $100$ features. Both models were run for $10000$ iterations.

Figure~\ref{fig:toydata} shows the features recovered by both models, and some sample image reconstructions. By incorporating prior knowledge about the number of features, the restricted model is able to find the expected features and achieve superior reconstructions.

\begin{figure*}[t!]
\begin{center}
\vspace{10pt}
\begin{overpic}[width=0.8\textwidth]{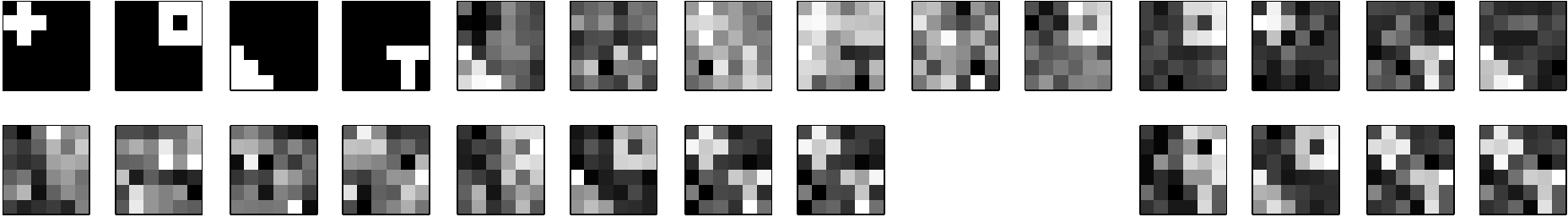}
\put(-12,6.8){\rule{0.95\textwidth}{1pt}}
\put(28.2,-5){\rule{1pt}{80pt}}
\put(71.8,-5){\rule{1pt}{80pt}}
\put(-12,10){features}
\put(-12,3){images}
\put(9,17){generated}
\put(45,17){IBP}
\put(83,17){rIBP}
\end{overpic}
\vspace{8pt}
\caption{Left: Generating features and sample images. Center/right: Features and reconstructions learned using the IBP (center)
  and the IBP restricted to have two features per data point (right).}
\label{fig:toydata}
\end{center}
\end{figure*}

\subsection{Classification of text data}\label{sec:20ng}
The IBP and its extensions have been used to directly model text data\cite{Thibaux:Jordan:2007, Teh:Gorur:2009}. In such settings, the IBP
is used to directly model the presence or absence of words, and so the
matrix is observed rather than latent, and the total number of features is givenby the vocabulary size. We hypothesise that the Poisson assumption made by the IBP is not appropriate for text data, as the statistics of word use in natural language tends to follow a heavier tailed distribution \cite{Zipf:1932}. To test this hypothesis, we modeled a collection of corpora using both an IBP, and an IBP restricted to have heavier tailed distributions over the number of features in each row. Our corpora  were 20 collections of newsgroup postings on various topics (for example, comp.graphics, rec.autos, rec.sport.hockey)\footnote{http://people.csail.mit.edu/jrennie/20Newsgroups/}. To evaluate the quality of the models, we classified held out documents based on their probability under each topic.  This experiment is designed to replicate
an experiment performed by \cite{Teh:Gorur:2009} to compare the
original and three-parameter IBP models.

For our restricted model, we chose a negative Binomial distribution
over the number of words. For both the IBP and the rIBP we estimated the predictive distribution by generating 1000 samples from the posterior of the beta process in the IBP model. No pre-processing of the documents was performed. Since the vocabulary (and hence the feature space) is finite, we used finite versions of both the IBP and the rIBP. Due to the very large state space, we restricted our samples such that, in a single sample, atoms with the same posterior distribution were assigned the same value. In the case of the IBP, we used these samples directly to estimate the predictive distribution; for the restricted model, we used the importance-weighted samples obtained using Equation~\ref{eq:importance}. For each model, $\alpha$ was set to the mean number of features per document in the corresponding group, and the maximum likelihood parameters were used 
for the negative Binomial distribution. For each model, we trained on 1000 randomly selected documents, and tested on a further 1000 documents. 

We evaluated the models by classifying the remaining documents based on
their likelihood under each of the 20 newsgroups. We looked at the fraction correctly classified at $n$ -- ie for each $n=1,\dots,20$ we looked at whether the correct label is one of the $n$ most likely labels. Table~\ref{tab:20ng} shows the fraction of documents correctly classified in the first $n$ labels. The restricted IBP performs uniformly better than the unrestricted model. 

\begin{table}
\begin{center}
\begin{tabular}{|c|c|c|c|c|c|} \hline
 &1 & 2 & 3 & 4 & 5\\
\hline
IBP & 0.591 & 0.726 & 0.796 & 0.848 & 0.878 \\
rIBP & 0.622 & 0.749 & 0.819 & 0.864 & 0.918\\ \hline
\end{tabular}
  \caption{Proportion correct at $n$ on classifying documents from the 20newsgroup dataset.}\label{tab:20ng}
\end{center}
\end{table}

\section{Conclusion}
In this paper we have explored ways of relaxing the distributional assumptions made by existing exchangeable nonparametric processes. The resulting models allow us to specify a distribution over the number of features exhibited by each data point, permitting greater flexibility in model specification. As future work, we intend to explore which applications and models can most benefit from the distributional flexibility afforded by this class of models.



\bibliographystyle{plain}
\bibliography{ribp}
\end{document}